\theoremstyle{definition}
\newtheorem{definition}{Definition}
\newtheorem{theorem}{Theorem}
\newtheorem{corollary}{Corollary}
\newtheorem{lemma}{Lemma}
\newcounter{myclaim}[section] 
\newenvironment{myclaim}[1][]{\refstepcounter{myclaim}\par \medskip
   \noindent \textbf{Claim~\themyclaim. #1} \rmfamily}{}
\newcommand{\ie}{{i.e.}}
\newcommand{\eg}{{e.g.}}
\newcommand{\NN}{\mathbb{N}} 
\newcommand{\ZZ}{\mathbb{Z}} 
\newcommand{\RR}{\mathbb{R}} 
\newcommand{\eps}{\varepsilon}
\def\F{\mathcal F}
\def\O{\mathcal O}
\def\T{\mathcal T}
\newcommand{\Area}{\operatorname{Area}}
\def\den{\pi_T} 
\def\denL{\pi_L} 
\newcommand{\area}{A_T}
\newcommand{\areaL}{A_L}
\def\dendisc{\pi_T^{\operatorname{disc}}}
\def\areadisc{A_T^{\operatorname{disc}}}
\def\deltasup{\overline{\delta}}
\def\deltainf{\underline{\delta}}
\def\flw{\lfloor w\rfloor}
\def\flh{\lfloor h\rfloor}
\newcommand{\floor}[1]{\left\lfloor #1 \right\rfloor}
\newcommand{\ceil}[1]{\left\lceil #1 \right\rceil}
\def\covdenT{\vartheta_T}
\def\d{\operatorname{d}\!}
\newcommand{\later}[1]{}
\newcommand{\old}[1]{}
\begin{document}
\title{Piercing All Translates of a Set of Axis-Parallel Rectangles
  \footnote{A preliminary version appears in Proceedings of IWOCA 2021.}
} 

\author{
Adrian Dumitrescu\thanks{%
Algoresearch L.L.C., Milwaukee, WI, USA.\,
Email~\texttt{ad.dumitrescu@gmail.com}.}\qquad
Josef Tkadlec\thanks{%
Department of Mathematics, Harvard University, Cambridge, MA 02138, USA.
Email~\texttt{josef.tkadlec@gmail.com}.}
}

\maketitle        

\begin{abstract}
For a given shape $S$ in the plane, one can ask what is the lowest possible density
of a point set $P$ that pierces (``intersects'', ``hits'') all translates of $S$.
This is equivalent to determining the covering density of $S$ and as such is well studied.
Here we study the analogous question for families of shapes where the connection to
covering no longer exists.
That is, we require that a single point set $P$ simultaneously pierces
each translate of each shape from some family $\F$.
We denote the lowest possible density of such an $\F$-piercing point set by $\den(\F)$.
Specifically, we focus on families $\F$ consisting of axis-parallel rectangles.
When $|\F|=2$ we exactly solve the case when one rectangle is more squarish
than $2\times 1$, and give bounds (within $10\,\%$ of each other) for the remaining case
when one rectangle is wide and the other one is tall.
When $|\F|\ge 2$ we present a linear-time constant-factor approximation algorithm
for computing $\den(\F)$ (with ratio $1.895$).

\medskip
\textbf{\small Keywords}: axis-parallel rectangles, piercing, approximation algorithm.

\end{abstract}

\section{Introduction}  \label{sec:intro}

In a game of Battleship, the opponent secretly places ships of a fixed shape
on an $n\times n$ board and your goal is to sink them by identifying all the cells
the ships occupy (the ships are stationary). Consider now the following puzzle: 
If the opponent placed a single $2\times 3$ ship, how many attempts do you need
to surely hit the ship at least once?  
The answer depends on an extra assumption.
If you know that the ship is placed, \eg, vertically, it is fairly easy
to see that the answer is roughly $n^2/6$: When $n$ is a multiple
of~$6$, then one hit is needed per each of the $n^2/6$ interior-disjoint
translates of the $2\times 3$ rectangle that tile the board and, on
the other hand, a lattice with basis $[2,0],[0,3]$ achieves the objective. 
The starting point of this paper was to answer the question when it is
\emph{not} known whether the ship is placed vertically or horizontally.
It turns out that the answer is $n^2/5+\O(n)$ hits (the main term comes
from~\cref{thm:two}\,(ii) whereas the $\O(n)$ correction term is due to 
the boundary effect). 

Motivated by the above puzzle, we study the following problem:
Given a family $\F$ of compact shapes in the plane, what is its translative piercing density
$\den(\F)$, that is, the lowest density of a point set that pierces
(``intersects'', ``hits'') every translate of each member of the family? 
Here the density of an infinite point set $P$ (over the plane) is defined in the standard fashion
as a limit of its density over a disk $D_r$ of radius $r$, as $r$ tends to infinity. 
The piercing density $\den(\F)$ of the family is then defined as the infimum over
all point sets that pierce every translate of each member of the
family~\cite[Ch.~1]{BMP05},~\cite{FT17}.
(See~\cref{subsec:preliminaries} for precise definitions.) 
Note that unlike in the puzzle, we allow translations of each shape in the
family by any, not necessarily integer, vector. 

First, we cover the case when the family $\F=\{S\}$ consists of a single
shape. The problem is then equivalent to the
classical problem of determining the translative covering density
$\covdenT(S)$ of the shape $S$: 
Indeed, determining the translative covering density $\covdenT(S)$ amounts to
finding a (sparsest possible) point set $P$ such that the translates
$\{p+S\mid p\in P\}$ cover the plane, that is, 
\[(\forall x\in \RR^2) (\exists p\in P) \text{ such that } x\in p+S.\]
(Here ``+'' is the Minkowski sum.)
This is the same as requiring that
\[(\forall x\in \RR^2) (\exists p\in P) \text{ such that } p\in x+(-S),\]
that is, the point set $P$ pierces all translates of the shape $-S$.
Hence $\covdenT(S)=\den(\{-S\})=\den(\{S\})$.
Specifically, when $S$ tiles the plane, then the answer is simply $\den(\{S\})=1/\Area(S)$,
where $\Area(S)$ is the area of $S$.
We note that apart from the cases when $S$ tiles the plane,
the translative covering density $\covdenT(S)$ is known only for a few
special shapes $S$ such as a disk or a regular $n$-gon~\cite[Ch.~1]{BMP05}.

For the rest of this work (apart from the Conclusions) we limit ourselves
to the case when $\F$ consists of $n\ge 2$ axis-parallel rectangles.
First we consider the special case $n=2$ (\cref{thm:two} in~\cref{sec:two}),
then we consider the case of arbitrary $n\ge 2$ (\cref{thm:approx} in~\cref{sec:approx}).

\subsection{Related Work} 
There is a rich literature on related (but fundamentally different)
fronts dealing with piercing \emph{finite} collections.
One broad direction is devoted to establishing combinatorial bounds on the piercing number
as a function of other parameters of the collection, most notably the matching
number~\cite{AK92,CSZ18,CFPS15,DJ11,FdFK93,GN15,He23,HW17,Ka00,Ka91,KT96,KNPS06}
or in relation to Helly's theorem~\cite{DGK63,HB64,He23}; see also the survey
articles~\cite{Eck03,HW17}. 
Another broad direction deals with the problem of piercing a given set of
shapes in the plane (for instance axis-parallel rectangles) by the minimum number of points
and concentrates on devising algorithmic solutions, ideally exact but frequently approximate;
see for instance~\cite{CKL08,Ch03,CM05}. Indeed, the problem of computing the piercing number
corresponds to the hitting set problem in a combinatorial setting~\cite{GJ79} and is known
to be NP-hard even for the special case of axis-aligned unit squares~\cite{FPT81}.
The theory of $\eps$-nets for planar point sets and axis-parallel
rectangular ranges is yet another domain at the interface between algorithms
and combinatorics in this area~\cite{AES10,MV17}.

A third direction that appears to be most closely related to this paper is around the problem
of estimating the area of the largest empty axis-parallel rectangle amidst $n$ points
in the unit square, namely, the quantity $A(n)$ defined below. 
Given a set $S$ of $n$ points in the unit square $U=[0,1]^2$, 
a rectangle $R \subset U$ is \emph{empty} if it contains no points of $S$ in its interior.
Let $A(S)$ be the maximum volume of an empty box contained in $U$
(also known as the \emph{dispersion} of $S$), and let $A(n)$
be the minimum value of $A(S)$ over all sets $S$ of $n$ points in $U$.
It is known that $1.504 \leq \lim_{n \to \infty} n A(n) \leq 1.895$;
see also~\cite{AHR17,Kr18,So18,UV18}. The lower bound
is a recent result of Bukh and Chao~\cite{BC20} and the upper bound is another recent result of
Kritzinger and Wiart~\cite{KW21}. It is worth noting that the upper bound 
$\varphi^4/(\varphi^2+1)=1.8945\ldots$
can be expressed in terms of the golden ratio $\varphi=\frac12(1+\sqrt 5)$.
The connection will be evident in~\cref{sec:approx}.

\subsection{Preliminaries} \label{subsec:preliminaries}

Throughout this paper, a \textit{shape} is a Lebesque-measurable compact subset of the plane. 
Given a shape $S$, let $\Area(S)$ denote its area.
We identify points in the plane with the corresponding vectors from the origin.
Given two shapes $A,B\subset \RR^2$, we denote by
$A+B = \{ a+b \mid a\in A, b\in B \}$ their \textit{Minkowski sum}.
A \textit{translate} of a shape $S$ by a point (vector) $p$ is the shape
$p+S = \{ p+s \mid s\in S \}$.

In the next three definitions we introduce the (translative) piercing density
$\den(\F)$ of a family $\F$ of shapes in the plane.
Then we define a certain shorthand notation for the special case when $\F$ consists
of two axis-parallel rectangles.

\begin{definition}[$\F$-piercing sets]
  Given a family $\F$ of shapes in the plane, we say that a point set $P$ is
  \textit{$\F$-piercing} if it intersects all translates of all the shapes in $\F$, that is, if
  \[(\forall S\in \F)(\forall x\in \RR^2) (\exists p\in P) \text{ such that } p\in x+S.\]
\end{definition}

\begin{definition}[Density of a point set]
Given a point set $P$ and a bounded domain $D$ with area $\Area(D)$,
we define the \textit{density of $P$ over $D$} by
$\delta(P,D)=\frac{|P\cap D|}{\Area(D)}$. 

Given a (possibly infinite and unbounded) point set $P$, we define its
\textit{asymptotic upper and lower densities} by 
\[
\deltasup (P) =\limsup_{r\to\infty} \delta(P,D_r) 
\qquad\text{and}\qquad
\deltainf (P) =\liminf_{r\to\infty} \delta(P,D_r),
\]
where $D_r$ is the disk with radius $r$ centered at the origin.
\end{definition}

\begin{definition}[Translative piercing density $\den(\F)$]
Fix a family $\F$ of shapes in the plane.
Then we define the (translative) \textit{piercing density}
by
\[
\den(\F) = \inf_{P \text{ is $\F$-piercing}} \left\{ \deltainf(P)\right\}
\]
and the (translative) \textit{lattice piercing density} $\denL(\F)$ by
\[
\denL (\F) = \inf_{P \text{ is an $\F$-piercing lattice}} \left\{ \deltainf(P)\right\}.
\]
\end{definition}

\paragraph{Pairs of Axis-Parallel Rectangles.}
Let $R_{w\times h}$ denote a rectangle with width $w$ and height~$h$.
Here we introduce a shorthand notation for the case when $\F=\{R_{a\times b},R_{c\times d}\}$
consists of two axis-parallel rectangles.
If $a\le c$ and $b\le d$ then clearly $\den(\F)=\denL(\F)=1/(ab)$ as the lattice with basis $\{[a,0],[0,b]\}$
that pierces all translates of the smaller rectangle also pierces all translates of the larger rectangle.
Otherwise we can suppose $a\ge c$ and $b\le d$.
Stretching horizontally by a factor of $c$ and then vertically by a factor of $b$, we have
\[\den(\F)=c\cdot \den(\{ R_{\frac ac\times b},R_{1\times d}\}) =
bc\cdot\den(\{R_{\frac ac \times1},R_{1\times \frac db}\}).
\]
and likewise for $\denL(\F)$.
Thus it suffices to determine
\[
\den(w,h) := \den(\{ R_{w\times 1},R_{1\times h} \}) \quad\text{and}\quad
\denL(w,h) := \denL(\{ R_{w\times 1},R_{1\times h} \})
\]
for $w,h\ge 1$.
We say that a point set (resp. a lattice) $P$ is $(w,h)$-piercing if
it is $\{R_{w\times 1},R_{1\times h}\}$-piercing.
It is sometimes convenient to work with the reciprocals
$\area(w,h)=1/\den(w,h)$ (resp. $\areaL(w,h)=1/\denL(w,h)$)
which correspond to the \textit{largest possible per-point area}
of a $(w,h)$-piercing point set (resp. lattice).
Note that $\areaL(w,h)\le \area (w,h)$, since the sparsest
$(w,h)$-piercing point set perhaps does not have to be a lattice.
Also, $ \area (w,h)\le \min(w,h)$, as translates of the smaller rectangle
tile the plane and each translate needs to be pierced.

\subsection{Results}  \label{subsec:results}

The following theorem and its corollary summarize our results for piercing all translates
of two axis-parallel rectangles in $\RR^2$.

\begin{theorem}\label{thm:two} Fix $w,h\ge 1$.
\begin{enumerate}[{\rm (i)}]
\item When $\flw \ne \flh$ then $\area(w,h)=\areaL(w,h)=\min\{w,h\}$.
\item\label{itm:general} When $\flw = \flh=k\ge 1$, set $w=k+x$, $h=k+y$ for $x,y\in[0,1)$. Then
\[ \max\left\{ k, k+ xy - \frac{k-1}{k}(1-x)(1-y)\right\} \le \areaL(w,h)\le  \area(w,h) \le k+xy.
\]
\end{enumerate}
\end{theorem}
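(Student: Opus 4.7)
For part (i), assume WLOG $\flw<\flh$ (so $w<h$). Since translates of $R_{w\times 1}$ tile the plane, every tile needs a point of any piercing set, giving $\area(w,h)\le w$ immediately. For the matching lower bound I would use the lattice with basis $\{(w,0),\ (w/\flh,1)\}$ of area-per-point exactly $w$: every $R_{w\times 1}$ meets one of the rows at $y\in\mathbb{Z}$ (spaced by $1$) and hence a lattice point at $x$-spacing $w$, and every $R_{1\times h}$ meets at least $\flh$ consecutive rows whose $x$-phases are equally spaced by $w/\flh$; since $\flw<\flh$ forces $w<\flh$, this spacing is strictly below $1$, so the width-$1$ column of the rectangle contains a lattice point.

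For part (ii), the lower bound $\areaL(w,h)\ge k$ is analogous: the "staircase" lattice with basis $\{(k,0),(1,1)\}$ works because the $k$ consecutive rows in any $R_{1\times h}$ have $x$-phases $0,1,\ldots,k-1$ modulo $k$, with maximum gap exactly $1$. The refined bound $\areaL(w,h)\ge k+xy-\frac{k-1}{k}(1-x)(1-y)$ is more delicate. For $k=1$ it reduces to $1+xy$, attained by the lattice $\{(1,-y),(x,1)\}$ of determinant $1+xy$; piercing is checked by verifying that the two rows in any height-$1$ strip have combined $x$-phase gaps $\le w$, and the three rows in any height-$h$ strip have combined $x$-phase gaps $\le 1$. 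For $k\ge 2$, a direct analysis of lattices with a horizontal basis vector $(p,0)$ suggests they fall short of the refined bound (the trade-off between row spacing $r\le 1$ and phase coverage in $\lfloor h/r\rfloor$ rows becomes binding), so the target construction should be a genuinely tilted lattice; my plan is to produce one that effectively "interleaves" $k$ shifted copies of the $k=1$ lattice to cover the wider strips while retaining the correct per-point determinant.

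For the upper bound $\area(w,h)\le k+xy$, the plan is a careful counting inside a $w\times h$ window $R$. The window contains $k$ disjoint horizontal $R_{w\times 1}$ strips and $k$ disjoint vertical $R_{1\times h}$ strips; a bipartite König-type matching forces $|P\cap R|\ge k$, which alone yields only density $\ge k/(wh)<1/(k+xy)$. To bridge the gap I would additionally invoke the $(k+1)$-th shifted horizontal and vertical strips (overlapping the first $k$ in slivers of dimensions related to $1-y$ and $1-x$) together with the "uncovered" $x\times y$ corner region in the $(k+x)\times(k+y)$ decomposition, and extract through an LP-duality or inclusion-exclusion argument the stronger average bound $|P\cap R|\ge wh/(k+xy)$ over translates of $R$, which passes to the density statement via the standard averaging over a ball $D_r$.

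The main obstacle is this upper bound argument, since any integration of piercing indicators against uniform weights yields only $\area\le \min\{w,h\}$; the $xy$ term is genuinely nonlinear in the constraints and must come from a nontrivial interaction between horizontal and vertical piercing rectangles rather than any linear combination of them. A secondary obstacle is producing the explicit tilted lattice attaining the refined lower bound for $k\ge 2$: the natural scalings of the $k=1$ lattice do not pierce, and balancing the tilt against horizontal period to simultaneously cover every $w\times 1$ and every $1\times h$ rectangle requires care.
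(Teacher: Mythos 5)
Your part (i) lattice and the basic $k$ bound lattice in (ii) are both correct, though different from the paper's (the paper uses $\Lambda_1$ with basis $[1,h-1],[1,-1]$ and $\Lambda_2$ with basis $[1,k-1],[1,-1]$, whereas yours use a horizontal basis vector plus a unit-row offset; both approaches verify cleanly). Your $k=1$ lattice $\{(1,-y),(x,1)\}$ is a reflection of the paper's $\Lambda_3$ specialized to $k=1$, so that also checks out. However, the two hardest parts of the theorem remain genuinely open in your proposal, and you acknowledge as much. For the refined lower bound with $k\ge 2$ you need the tilted lattice $\Lambda_3$ with basis $u_3=[1,h-1]$, $v_3=[(w-1)/k,-1]$; its determinant is $(w-1)(h-1)/k+1=k+xy-\frac{k-1}{k}(1-x)(1-y)$, and piercing is verified by translating an allegedly empty $w\times 1$ or $1\times h$ rectangle down-and-left until it is cornered by a lattice point $p$ and then exhibiting a second lattice point (namely $p+u_3$ or $p+u_3+kv_3$) on the rectangle's boundary. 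Your vague ``interleaving'' plan doesn't produce this, and without it the lower bound is not established.

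The more serious gap is the upper bound $\area(w,h)\le k+xy$. You correctly observe that a uniform averaging of piercing indicators only recovers $\min\{w,h\}$, but your proposed remedy---invoking extra shifted strips plus an LP-duality/inclusion-exclusion step to get an ``average bound $|P\cap R|\ge wh/(k+xy)$''---is not an argument; no mechanism is given for combining the strips, and the pointwise version of that inequality is false (a single $w\times h$ window can contain exactly $k$ points, which is smaller than $wh/(k+xy)$ for $k\ge 2$). The missing idea in the paper is a \emph{non-uniform} weighting by membership in a ``zone'' $Z\subset R$, a union of $k^2$ rectangles of size $(1-x)\times(1-y)$ placed at the $k\times k$ grid of unit lattice points inside $R$. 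One first shows the deterministic inequality $(k+1)I_a+kJ_a\ge k(k+1)$ for the zone count $I_a$ and non-zone count $J_a$ of any $w\times h$ window (because any point outside the zone leaves room for $k$ disjoint copies of the smaller rectangle), then integrates $I_a/\Area(Z)$ and $J_a/\Area(R\setminus Z)$ over translates to get $|P_r|$ in each case, and the identity $k\Area(R)+\Area(Z)=k(k+1)(k+xy)$ closes the computation. This is exactly the nonlinear interaction between horizontal and vertical strips that you recognize must be there; without the zone construction and the two-weight averaging, your plan does not reach $k+xy$.
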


Note that the inequalities in (ii) become equalities in two different cases:
When $k=1$ then $\areaL(w,h)=\area(w,h)=k+xy$ and when $\min\{x,y\}=0$
(that is, when $w$ or $h$ is an integer) then $\areaL(w,h)=\area(w,h)=\min\{w,h\}=k$.

\begin{corollary} \label{cor:two}
Given a set $\F=\{R_1,R_2\}$  of two axis-parallel rectangles,
a~$1.086$\nobreakdash-approximation of $\den(\F)$ can be computed in $\O(1)$ time.
The output piercing set is a lattice with density at most $(\frac52-\sqrt 2)\cdot\den(\F)$.
\end{corollary}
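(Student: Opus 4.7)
The plan is to derive~\cref{cor:two} directly from~\cref{thm:two}. The normalization in~\cref{subsec:preliminaries} reduces any instance $\F=\{R_1,R_2\}$, in $\O(1)$ time, to the canonical form $\F=\{R_{w\times 1},R_{1\times h}\}$ with $w,h\ge 1$; this transformation preserves both the approximation ratio and the property of being a lattice, so after unscaling it suffices to produce a good lattice on the canonical instance. For this I would output the lattice certified by~\cref{thm:two}: the exactly optimal lattice of part~(i) when $\flw\ne\flh$, and the lattice realizing the lower bound on $\areaL$ in part~(ii) when $\flw=\flh=k$.

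It remains to bound the approximation ratio $\rho=\area(w,h)/A_{\rm out}$, where $A_{\rm out}$ is the per-point area of the output lattice. Part~(i) and the $k=1$ subcase of part~(ii) both give $\rho=1$. For $k\ge 2$, writing $w=k+x$, $h=k+y$, $c=(k-1)/k$, and $g=c(1-x)(1-y)$, part~(ii) yields
\[
\rho \;\le\; \frac{k+xy}{k+\max\{0,\,xy-g\}}.
\]
A short partial-derivative check shows that on the open region $xy>g$ this bound is strictly decreasing in both $x$ and $y$, so its supremum is attained on the boundary curve $\gamma=\{(x,y)\in[0,1)^2:xy=g\}$, where the bound simplifies to $1+xy/k$. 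It therefore suffices to maximize $xy$ on $\gamma$.

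On $\gamma$ the relation $xy=c(1-x)(1-y)$ rearranges, setting $s=x+y$, to $xy=\tfrac{c}{1-c}(1-s)$, so maximizing $xy$ is equivalent to minimizing $s$; by AM--GM this forces $x=y$. Solving $x^2=c(1-x)^2$ gives $x=\sqrt c/(1+\sqrt c)$, and using $1/k=1-c=(1-\sqrt c)(1+\sqrt c)$ the worst-case ratio takes the clean form
\[
\rho_{\max}(k)-1 \;=\; \frac{c}{k\,(1+\sqrt c)^2} \;=\; \frac{a^2(1-a)}{1+a}, \qquad a:=\sqrt c.
\]
The main (and essentially only) obstacle is then the one-variable check that $a^2(1-a)/(1+a)$ is decreasing on $[1/\sqrt 2,1)$: logarithmic differentiation identifies its unique critical point in $(0,1)$ as $a=(\sqrt 5-1)/2<1/\sqrt 2$, ensuring monotone decrease on $[1/\sqrt 2,1)$, which corresponds to $k\ge 2$. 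The maximum is therefore attained at $k=2$: there $a=1/\sqrt 2$, $x=y=\sqrt 2-1$, $xy=3-2\sqrt 2$, and
\[
\rho_{\max}(2)=1+\frac{3-2\sqrt 2}{2}=\frac{5}{2}-\sqrt 2<1.086,
\]
as required.
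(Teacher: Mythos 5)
Your proof is correct, and it differs from the paper's in a meaningful way. The paper simply reduces the corollary to computing
\[
\sup_{\substack{k\ge2,\, k\in\NN\\ x,y\in[0,1)}}
\frac{k+xy}{\max\bigl\{k,\ k+xy-\tfrac{k-1}{k}(1-x)(1-y)\bigr\}}
\]
and then \emph{delegates the optimization to a computer algebra system}, merely reporting that the extremum is attained at $k=2$, $x=y=\sqrt2-1$. You instead carry out the optimization by hand: monotonicity in $x,y$ to push to the curve $xy=c(1-x)(1-y)$, AM--GM to reduce to the diagonal $x=y$, and a one-variable calculus argument in $a=\sqrt{(k-1)/k}$ showing the resulting expression $a^2(1-a)/(1+a)$ has its unique interior critical point at $a=(\sqrt5-1)/2=1/\varphi<1/\sqrt2$ and is therefore decreasing on the range relevant to $k\ge2$; this pins the supremum to $k=2$. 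Your route gives a self-contained, verifiable derivation (and as a pleasant incidental, surfaces the golden ratio that also underlies \cref{lem:area}), at the cost of being longer than the paper's one-line CAS appeal. All intermediate claims check out: the partial derivatives of $(k+u)/(k+u-v)$ with $u=xy$, $v=c(1-x)(1-y)$ indeed have the stated sign since $v>0$ on $[0,1)^2$; the relation $xy=\tfrac{c}{1-c}(1-s)$ on $\gamma$ and the AM--GM step force $x=y$ at the extremum; $x=\sqrt c/(1+\sqrt c)$ and the simplification via $1/k=1-c$ are correct; and $f(1/\sqrt2)=(3-2\sqrt2)/2$ yields $\rho_{\max}=(5-2\sqrt2)/2$. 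You also correctly handle the reduction to canonical form and the trivial cases $\flw\ne\flh$ and $k=1$.
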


\smallskip
We then address the general case of piercing all translates of any finite collection
of axis-parallel rectangles.

\begin{theorem} \label{thm:approx}
Given a family $\F=\{R_1,\dots,R_n\}$ consisting of $n$ axis-parallel rectangles,
a~$1.895$-approximation of $\den(\F)$ can be computed in $\O(n)$ time.
The output piercing set is a lattice with density at most $(1+\frac25\sqrt 5)\cdot\den(\F)$.
\end{theorem}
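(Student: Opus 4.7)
Plan. The approach combines a simple lower bound based on the smallest-area rectangle with a lattice construction inspired by the Fibonacci/Kritzinger--Wiart dispersion result.

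For the lower bound, set $A^* := \min_i w_i h_i$. Since any $\F$-piercing set must in particular pierce every translate of the minimum-area rectangle $R_i$, and since translates of any axis-parallel rectangle tile the plane (giving $\den(\{R_i\}) = 1/\Area(R_i)$), we immediately obtain
\[
\den(\F) \;\ge\; \frac{1}{A^*}.
\]

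For the construction, I would use a Fibonacci-type lattice $L = \mathbb{Z}\,(a, 0) + \mathbb{Z}\,(a/\varphi, c)$, where $\varphi$ is the golden ratio and $a, c > 0$ are chosen from $\F$. By the three-distance theorem for fractional parts of $j/\varphi$, the lattice $L$ pierces every translate of $R_i = R_{w_i \times h_i}$ precisely when $c \le h_i$ and $w_i \ge g_{M_i}\, a$, where $M_i := \lfloor h_i/c \rfloor$ and $g_M$ is the maximum cyclic gap among $M$ consecutive terms of $\{j/\varphi\}$. The algorithm scans a constant-size family of candidate values of $c$ determined by the input $\{h_i\}$, and for each sets $a := \min_i w_i/g_{M_i}$ in a single $O(n)$ pass; the lattice achieving minimum density is output.

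For the ratio analysis, the density of $L$ equals $1/(ac) = g_{M_{i^*}}/(c \cdot w_{i^*})$ for the index $i^*$ attaining the minimum defining $a$. Using $A^* \le w_{i^*} h_{i^*}$ and $h_{i^*}/c < M_{i^*}+1$, this quantity is at most $(M_{i^*}+1)\, g_{M_{i^*}}/A^*$. The Fibonacci structure of the three-distance gaps, paralleling the Kritzinger--Wiart dispersion bound, then yields $(M+1) g_M \le \varphi^4/(\varphi^2+1) = 1 + \tfrac{2}{5}\sqrt{5}$ for the relevant $M$'s---provided the candidate $c$'s are chosen to avoid the small-$M$ peaks (the naive choice $c = h_{\min}$ permits $M = 1$, which alone would yield the loose bound $(M+1)g_M = 2$). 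Combining with $\den(\F) \ge 1/A^*$ yields
\[
\deltasup(L) \;\le\; \Big(1 + \tfrac{2}{5}\sqrt{5}\Big) \cdot \den(\F),
\]
and the total running time is $O(n)$: one pass to compute $A^*$, $h_{\min}$, and the $M_i$'s for each of the $O(1)$ candidate $c$'s, plus $O(1)$ to output the lattice basis. The main obstacle is precisely the careful selection of candidate $c$'s so that $\max_i (M_i+1) g_{M_i}$ is controlled by the tight Kritzinger--Wiart constant $\varphi^4/(\varphi^2+1)$ rather than the looser bound of $2$; this is the technical heart of the proof and is where the golden-ratio connection promised in the Related Work manifests.
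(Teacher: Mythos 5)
Your lower bound $\den(\F)\ge 1/A^*$ matches the paper's, but your construction is genuinely different and it has a real gap at its stated ``technical heart.''

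The paper uses a \emph{fixed} lattice $\Lambda_\varphi$ with basis $[1,\varphi],[\varphi,-1]$. This lattice is invariant under rotation by $90^\circ$ (the two basis vectors are orthogonal and of equal length), and the paper proves (Lemma~\ref{lem:area}) that \emph{every} empty axis-parallel rectangle amidst $\Lambda_\varphi$ has area at most $\varphi^4$, with no exceptions and no aspect-ratio condition. One then simply rescales $\Lambda_\varphi$ uniformly so that the threshold $\varphi^4$ matches the smallest area $A^*$ in $\F$; the ratio $\varphi^4/(\varphi^2+1)$ pops out directly from the ratio of the area of the fundamental parallelogram to the empty-area threshold. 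There is no dependence on the particular heights $h_i$, no three-gap counting, and no parameters to tune.

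You instead use an $\F$-adaptive sheared Fibonacci lattice $\ZZ(a,0)+\ZZ(a/\varphi,c)$ and reduce to the bound $(M+1)g_M\le\varphi^4/(\varphi^2+1)$ on three-gap quantities. This inequality is \emph{false} for infinitely many $M$: for $M$ points $\{j/\varphi\}_{j=0}^{M-1}$ the maximal gap is $g_M=\varphi^{-(k-2)}$ for $F_k\le M\le F_{k+1}-1$, so at $M=F_{k+1}-1$ one gets $(M+1)g_M=F_{k+1}/\varphi^{k-2}$, which oscillates around the limit $\varphi^3/\sqrt5=\varphi^4/(\varphi^2+1)$ and exceeds it for even $k$: e.g.\ $M=4$ gives $5/\varphi^2\approx 1.910$, $M=12$ gives $13/\varphi^4\approx 1.897$, $M=33$ gives $34/\varphi^6\approx 1.895$, all strictly above $1.8944\ldots$. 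Since $M_i=\lfloor h_i/c\rfloor$ and the $h_i$ are arbitrary, there is no way to select $O(1)$ candidate values of $c$ that avoid all the bad $M$'s for every input family; and even allowing adversarial $\F$ aside, the supremum of your right-hand side over $M$ is strictly larger than $\varphi^4/(\varphi^2+1)$. The paper even flags this phenomenon in its Remarks, noting that the Fibonacci lattice and its modification ``lead to this bound only by a limiting process,'' whereas $\Lambda_\varphi$ achieves it uniformly. So the missing idea is the switch from the sheared Fibonacci lattice to the rotationally symmetric lattice $\Lambda_\varphi$, which replaces the delicate three-gap case analysis with a single, clean, universal area bound (established via continued-fraction convergents of $\varphi$) and makes the approximation ratio exact rather than asymptotic.
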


\section{Piercing Two Rectangles}  \label{sec:two}
In this section we prove~\cref{thm:two} and deduce~\cref{cor:two}.
The most involved part of the proof is the upper bound $\area(w,h)\le k+xy$.
The proof is an integral calculus argument, which originates from a probabilistic argument. 

\begin{proof}[Proof of~\cref{thm:two}]
(i) Note that $\area(w,h) \le \min\{w,h\}$:
  Indeed, any $(w,h)$-piercing point set has to pierce all
  the translates of the rectangle with smaller area and certain copies
  of that smaller rectangle tile the plane. To complete the proof,
  it suffices to exhibit a suitable $(w,h)$-piercing lattice.
Without loss of generality suppose that $\flh < \flw$. 
We will show that the lattice $\Lambda_1$ with basis $u_1=[1,h-1]$, $v_1=[1,-1]$ 
(see~\cref{fig:lattices}(a)) is $(w,h)$-piercing. Note that the area of the fundamental
parallelogram of the lattice is $(h-1)+1=h$, as required.

We first show that $\Lambda_1$ pierces all $1 \times h$ rectangles. Observe that the $1 \times h$
rectangles centered at points in $\Lambda_1$ tile the plane. Denote this tiling by $\T$.
Let now $R$ be any $1 \times h$ rectangle. Its center is contained in one of the
rectangles in $\T$, say $\sigma$. Then the center of $\sigma$ pierces $R$, as required.

We next show that $\Lambda_1$ pierces all $w \times 1$ rectangles.
It suffices to show that $\Lambda_1$ pierces all $w_0 \times 1$ rectangles, where $w_0=\flh +1$.
Let $R$ be any $w_0 \times 1$ rectangle. Assume that $R$ is not pierced by $\Lambda_1$.
Translate $R$ downwards until it hits a point in $\Lambda_1$, say $q$, and then leftwards
until it hits another point in $\Lambda_1$, say $p$. Let $R'$ denote the resulting rectangle.
Then $p$ is the top left corner of $R'$. Observe that the top and the right side of $R'$
are not incident to any other point in $\Lambda_1$. Consider the lattice point $s:=p+u_1+(w_0-1)v_1$;
note that $x(s)-x(p)=w_0$ and $y(s)-y(p)=h-1-\flh \in [-1,0)$.
As such, $s$ is contained in the right side of $R'$, a contradiction.
It follows that $R$ is pierced by $\Lambda_1$, as required. 

\begin{figure}[ht]
\centering
\includegraphics[width=\textwidth]{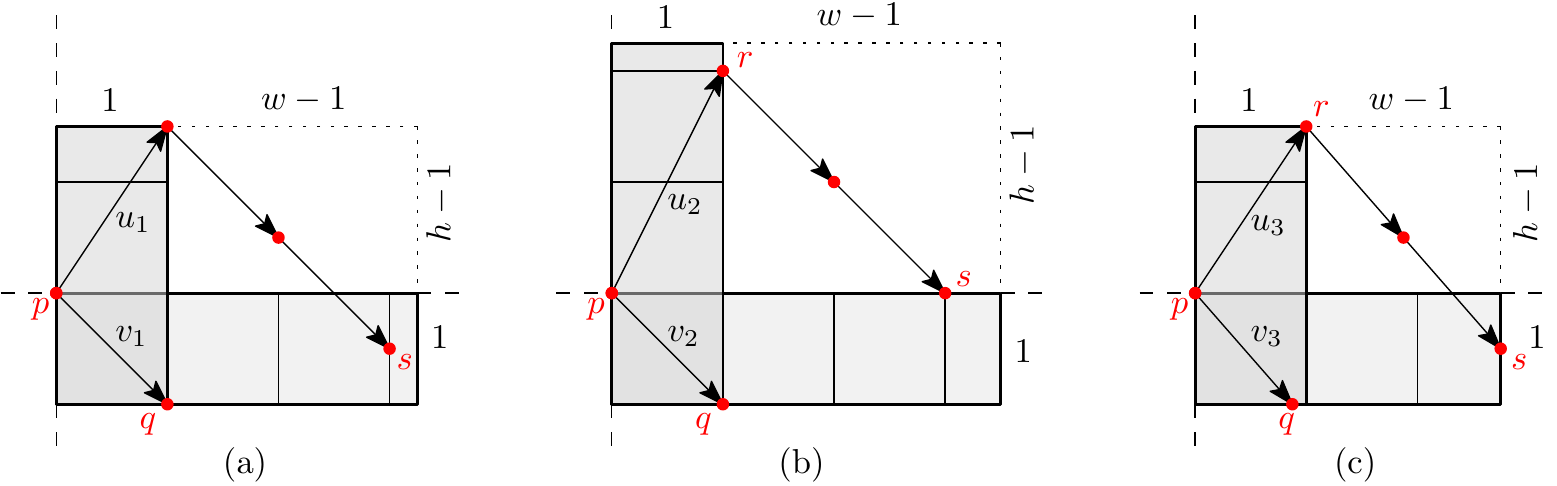}
\caption{(a) A lattice $\Lambda_1$ for the case $\flw\ne\flh$. Here $w=3+\frac14$, $h=2+\frac12$.
  (b) A lattice $\Lambda_2$ with basis $u_2=[1,k-1]$, $v_2=[1,-1]$  attesting that
  $\area(k+x,k+y)\ge k$. Here $w=3+\frac12$, $h=3+\frac14$. 
  (c) A lattice $\Lambda_3$ with basis $u_3=[1,h-1]$, $v_3=[(w-1)/k,-1]$ attesting that
  $\area(k+x,k+y)\ge  k+ xy - \frac{k-1}{k}(1-x)(1-y)$. Here $w=2+\frac34$, $h=2+\frac12$.
}
\label{fig:lattices}
\end{figure}

(ii) In order to prove the lower bound it suffices to exhibit suitable
lattices. We will show that the following lattices do the job: 
The lattice $\Lambda_2$ with basis $u_2=[1,k-1]$, $v_2=[1,-1]$ 
(see~\cref{fig:lattices}(b)) attests that $\area(k+x,k+y)\ge k$.
Note that the area of the fundamental parallelogram of $\Lambda_2$ is
$(k-1)+1=k$, as required.
The lattice $\Lambda_3$ with basis $u_3=[1,h-1]$, $v_3=[(w-1)/k,-1]$
(see~\cref{fig:lattices}(c)) attests that
$\area(k+x,k+y)\ge k+ xy - \frac{k-1}{k}(1-x)(1-y)$.
Note that the area of the fundamental parallelogram of $\Lambda_3$ is
$(w-1)(h-1)/k +1= (k+ xy) - \frac{k-1}{k}(1-x)(1-y)$, as required.

For both lattices, the proof proceeds by contradiction as in part~(i).
Assume that there exists an unpierced rectangle of dimensions
either $w\times 1$ or $1\times h$.
Translate the rectangle downwards until it hits a point in the lattice, say $q$,
and then leftwards until it hits another point in the lattice, say $p$.
For $\Lambda_2$, note that
$r:=p+u_2$ lies on the right edge of the $1\times h$ rectangle and that
$s:=p+u_2+(k-1)v_2$ lies on the top edge of the $w\times 1$ rectangle.
Similarly, for $\Lambda_3$ note that
$r:=p+u_3$ is the top right corner of the $1\times h$ rectangle and that
$s:=p+u_3+ k v_3$ lies on the right edge of the $w\times 1$ rectangle.   
Either way, we get a contradiction.

\smallskip
Finally, we show the upper bound, that is, $\area(w,h)\le k+xy$.
Recall that $\area(w,h) \le \min\{w,h\}=k + \min\{x,y\}$; we will obtain an improved
bound $\area(w,h)\le k+xy$ by an integral calculus argument. 
Let $P$ be a $(w,h)$-piercing point set, where $w=k+x$, $h=k+y$ with 
$k\in\NN$ and $x,y\in [0,1)$. 
The desired upper bound on $\area(w,h)$ will follow from a lower bound on the density
$ \delta(P,D_r) = \frac{|P \cap D_r|}{\Area(D_r)}$,
where $D_r$ is the disk with radius $r$ centered at the origin.
Fix a radius $r$ and write $P_r = P \cap D_r$.

Given a point $a=(a_x,a_y)\in\RR^2$, we denote by $R_a=[a_x-w,a_x]\times [a_y-h,a_y]$
the $w\times h$ rectangle whose top right corner is $a$.
For brevity, we denote $R=R_{(w,h)}=[0,w]\times [0,h]$.
We consider two sets of $w\times h$ rectangles:
Those that intersect $D_r$ and those that are contained within $D_r$.
We denote the sets of their top right corners by
$X=\{a\in \RR^2\mid R_a\cap D_r\neq\emptyset\}$ and
$W=\{a\in\RR^2\mid R_a\subset D_r\}$, respectively.
See~\cref{fig:proof-two}(a).

\begin{figure}[ht]
\centering
\includegraphics[width=0.95\textwidth]{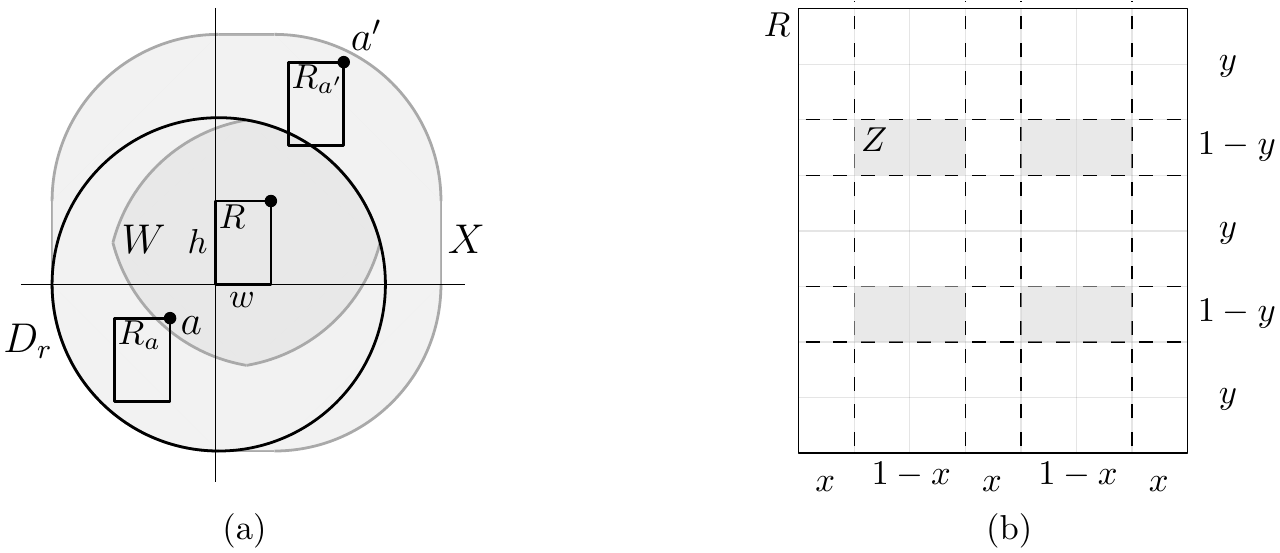}
\caption{(a) The top right corners of rectangles intersecting $D_r$ form a region $X=D_r+R$.
The top right corners of rectangles contained within $D_r$ form a region
$W=D_r\cap ([w,0]+D_r) \cap ([0,h]+D_r) \cap ([w,h]+D_r)$.
Both regions are convex and their boundaries consist of circular arcs and line segments.
(b) A $w\times h$ rectangle $R=R_{(w,h)}$ (here $k=2$, $x=1/3$, and $y=2/3$,
  hence $w=k+x=7/3$ and $h=k+y=8/3$). Its zone $Z$ is shaded.}
\label{fig:proof-two}
\end{figure}

Given a rectangle $R_a$, we define its \textit{zone} $Z_a$ to be a
union of $k^2$ closed rectangles with sizes $(1-x)\times (1-y)$ each, arranged as
in~\cref{fig:proof-two}(b). Note that $\Area(Z_a)=k^2(1-x)(1-y)$.
Further, let $I_a:=|P_r \cap Z_a|$ and $J_a:=|P_r \cap (R_a \setminus Z_a)|$
be the number of points of $P_r$ contained in $R_a$ inside its zone and outside of it, respectively.
We make two claims about $I_a$ and $J_a$.

\begin{myclaim}\label{claim1} 
If $a\in W$ then $(k+1)I_a+ kJ_a \geq k(k+1)$.
\end{myclaim}
\begin{proof}
Fix $a\in W$. Since $R_a\subset D_r$, we have $P\cap R_a=P_r\cap R_a$.
The key observation is that for any point $p \in R_a \setminus Z_a$, the set
$R_a \setminus \{p\}$ contains $k$ pairwise disjoint rectangles of
dimensions either all $w\times 1$ or all $1\times h$. We thus must have
$|P_r \cap R_a |\ge k+1$, except when $P_r \cap R_a \subseteq Z_a$, in which
case we must have $| P_r\cap R_a| \ge k$. 

Denote $I=I_a$ and $J=J_a$. There are two simple cases:
\begin{enumerate}
\item $J \geq 1$: Then $I+J \geq k+1$, thus $(k+1)I+ kJ \ge kI+kJ \geq k(k+1)$.
\item $J=0$: Then $I \ge k$, thus $(k+1)I+ kJ \ge k(k+1)$. \qedhere
\end{enumerate}
\end{proof}

\begin{myclaim} \label{claim2}
We have
\[ \int_X \frac{I_a}{\Area(Z)} \d a= \int_X \frac{J_a}{\Area(R\setminus Z)} \d a = |P_r|.
\]
\end{myclaim}
\begin{proof}
Fix $p\in P_r$. Note that the set $X_p=\{a\in\RR^2\mid p\in Z_a\}$ of top right corners of
$w\times h$ rectangles whose zone contains $p$ is a subset of $X$ congruent to $Z$.
Thus $\Area(X_p)=\Area(Z)$. Summing over $p\in P_r$ we obtain
\[\int_X \frac{I_a}{\Area(Z)} \d a= \frac{\sum_{p\in P_r}\Area(X_p)}{\Area(Z)} = |P_r|.
\]
For $J_a$ we proceed completely analogously.
\qedhere
\end{proof}

Now we put the two claims together to get a lower bound on $|P_r|$.

\begin{myclaim} \label{claim3}
We have $ |P_r| \geq \frac{\Area(W)}{k+xy}.$
\end{myclaim}
\begin{proof}
First, applying Claim~\ref{claim1} to all $w\times h$ rectangles $R_a$ with $a\in W$ and then invoking
$W\subset X$, we obtain
\begin{align*}
\Area(W)\cdot k(k+1) &= \int_{W} k(k+1) \d a \le (k+1)\int_W I_a \d a+k\int_W J_a \d a\\
	&\le (k+1)\int_X I_a\d a + k\int_X J_a\d a.
\end{align*}
By Claim~\ref{claim2} and straightforward algebra we further rewrite this as
\begin{align*}
  (k+1)\int_X I_a\d a &+ k\int_X J_a\d a =
  |P_r|\cdot \big( (k+1)\Area(Z) + k\Area(R\setminus Z)\big) \\
&= |P_r|\cdot (k\Area(R) + \Area(Z)) = |P_r|\cdot k(k+1)(k+xy),
\end{align*}
where the last equality follows from
\[ k \Area(R) + \Area(Z) = k(k+x)(k+y) + k^2(1-x)(1-y) = k(k+1)(k+xy).
\]
The bound $|P_r| \geq \frac{\Area(W)}{k+xy}$ follows by rearranging.
\qedhere
\end{proof}

Consequently, by Claim~\ref{claim3} we have
\[
  \delta(P,D_r) = \frac{|P_r|}{\Area(D_r)} \geq
\frac{\Area(W)}{\Area(D_r)} \cdot \frac1{k+xy}
 \to_{r\to\infty} \frac{1}{k+xy},
\]
where we used that $\Area(W) / \Area(D_r)\to 1$ as $r\to\infty$.
This in turn gives
\[\den(w,h) = \inf_{P}\left\{ \liminf_{r\to\infty}  \delta(P,D_r) \right\}\geq \frac{1}{k+xy}
\text{ and } \area(w,h) =\frac1{\den(w,h)} \le k+xy
\]
and completes the proof of Theorem~\ref{thm:two}.
\qedhere
\end{proof}

For a visual illustration of our results, see~\cref{fig:plots}. 

\begin{figure}[ht]
\centering
\includegraphics[width=0.95\textwidth]{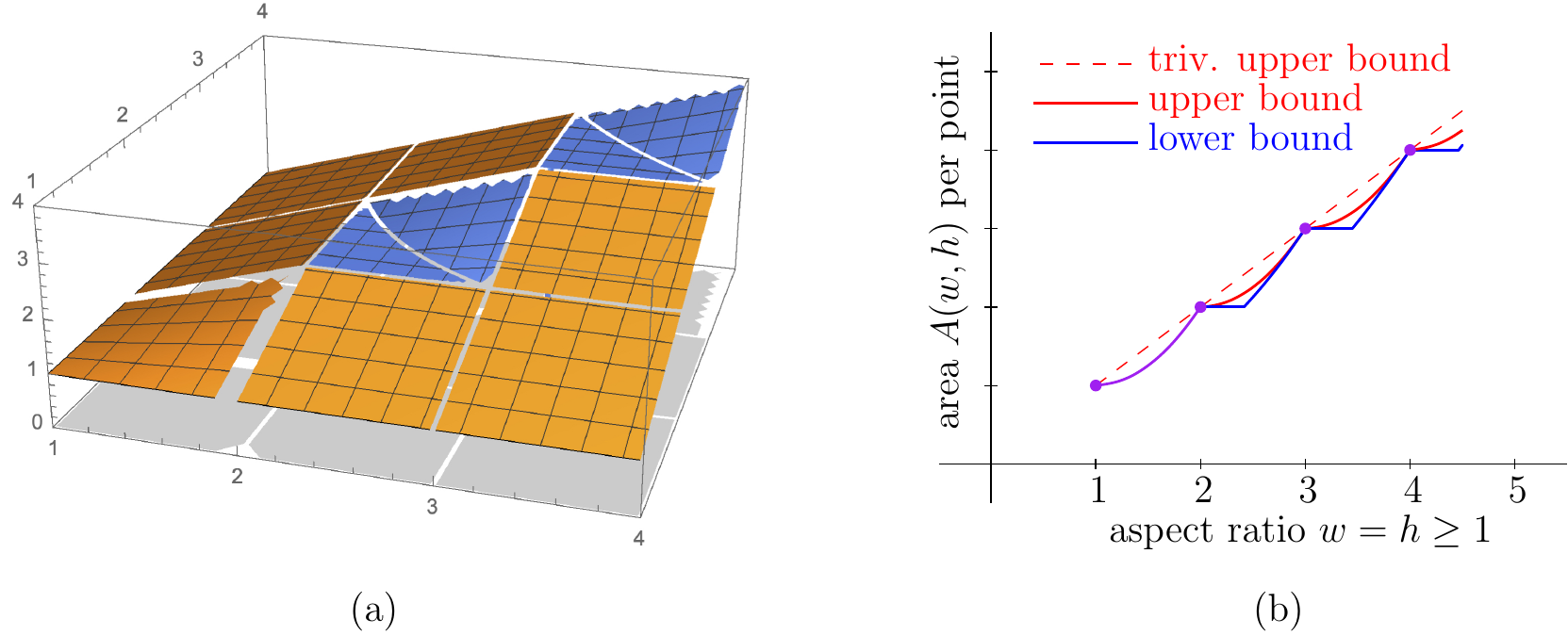}
\caption{
(a) We plot $\area(w,h)$ when $\flw\ne\flh$ and/or when
$\flw=\flh=1$ (orange). When $\flw=\flh \ge 2$ we plot the two lower
bounds from~\cref{thm:two}, \cref{itm:general} (blue). As
$k\to\infty$, the two lower bounds coincide for $x+y=1$. 
(b) A section corresponding to $w=h$. We plot the lower bounds
(blue) and the upper bound (red) on $\area(w,w)$
from~\cref{thm:two},~\cref{itm:general} and the trivial upper bound
$\area(w,w)\le w$ (red, dashed).
} 
\label{fig:plots}
\end{figure}

\begin{proof}[Proof of~\cref{cor:two}]
  It suffices to show that
  \[ \sup_{\stackrel{k \geq 2, \, k \in \NN}{x,y \in [0,1)}}
\frac {k+xy} {\max\left\{ k, k+ xy - \frac{k-1}{k}(1-x)(1-y)\right\}}
=\frac{5-2\sqrt 2}2< 1.086. \]
A computer algebra system (such as Mathematica) shows that the supremum is attained
when $k=2$ and when $x$, $y$ are both equal to a value that makes the two expressions
inside the $\max\{\}$ operator equal. This happens for $x=y=\sqrt 2-1$ and the corresponding
value is $(5-2\sqrt 2)/2 <1.086$ as claimed.
\qedhere
\end{proof}

\section{Piercing $n$ Rectangles}  \label{sec:approx}

In this section we prove~\cref{thm:approx}.
Let $\varphi=\frac12(1+\sqrt 5)$ be the golden ratio.
In~\cref{lem:area} we show that a lattice $\Lambda_\varphi$
with basis $u=[1,\varphi], v=[\varphi,-1]$ pierces all rectangles with area $\varphi^4$ or larger,
irrespective of their aspect ratio.  See~\cref{fig:phi-lattice}\,(a).
\cref{thm:approx} then follows easily by rescaling $\Lambda_\varphi$ to match
the smallest-area rectangle from the family.   

\begin{figure}[ht]
\centering
\includegraphics[width=0.95\textwidth]{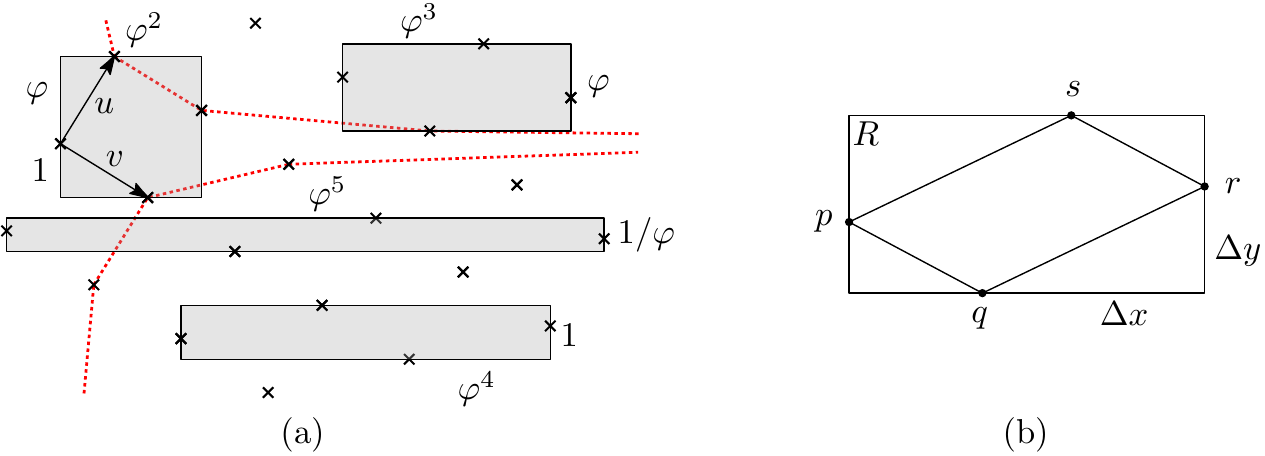}
\caption{(a) Empty rectangles amidst $\Lambda_\varphi$.
(b) A generic empty rectangle $R$.}
\label{fig:phi-lattice}
\end{figure}

Recall the well-known sequence of Fibonacci numbers defined by the following recurrence:
\begin{equation}
F_i = F_{i-1} + F_{i-2}, \text{ with } F_1=F_2=1.
\end{equation}
The first few terms in the sequence are listed in Table~\ref{tab:1} for easy reference;
here it is convenient to extend this sequence by $F_{-1}=1$ and $F_0=0$.

\begin{table}[htb]
\caption{The first few Fibonacci numbers.} 
\label{tab:1}
\centering
\begin{tabular}{|c||c|c|c|c|c|c|c|c|c|c|c|c|}
\hline
$m$ & $-1$ & 0 & 1 & 2 & 3 & 4 & 5 & 6 & 7 & 8 & 9 & 10 \\ 
\hline
$F_m$  & 1 & 0 & 1 & 1 & 2 & 3 & 5 & 8 & 13 & 21 & 34 & 55 \\
\hline
\end{tabular}
\end{table}

We first list several properties of Fibonacci numbers.
\begin{lemma} \label{lem:identities}
The following identities hold for every integer $m \geq 1$:
\begin{enumerate}
\item\label{itm:positive} $F_m \varphi + F_{m-1} = \varphi^m$,
\item\label{itm:negative} $F_m \varphi - F_{m+1} = (-1)^{m+1} \varphi^{-m}$,
\item\label{itm:square} $F_{2m+1}F_{2m-1} - (F_{2m})^2 =1$.
\end{enumerate}
\end{lemma}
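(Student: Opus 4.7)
My plan is to prove all three identities using only the defining relation $\varphi^2 = \varphi + 1$ (equivalently $\varphi - 1 = 1/\varphi$), combined with induction on $m$ and the Fibonacci recurrence $F_{m+1}=F_m+F_{m-1}$.

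For \eqref{itm:positive}, I would induct on $m$. The base case $m=1$ reads $F_1\varphi + F_0 = \varphi = \varphi^1$. Assuming the identity holds for indices $m-1$ and $m$, I would add the two instances: the left-hand sides combine, via $F_{m-1}+F_m = F_{m+1}$ and $F_{m-2}+F_{m-1}=F_m$, into $F_{m+1}\varphi + F_m$, while the right-hand sides give $\varphi^{m-1}+\varphi^m = \varphi^{m-1}(1+\varphi) = \varphi^{m+1}$, closing the induction.

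For \eqref{itm:negative}, I would induct on $m$ as well. The base case $m=1$ asserts $\varphi - 1 = 1/\varphi$, which is immediate from $\varphi^2=\varphi+1$. For the inductive step, applying the Fibonacci recurrence and then $\varphi-1=1/\varphi$ gives
\[
F_{m+1}\varphi - F_{m+2} \;=\; F_{m+1}(\varphi-1) - F_m \;=\; \frac{F_{m+1}}{\varphi} - F_m \;=\; -\frac{F_m\varphi - F_{m+1}}{\varphi}.
\]
Invoking the inductive hypothesis $F_m\varphi - F_{m+1} = (-1)^{m+1}\varphi^{-m}$ and dividing by $-\varphi$ yields $(-1)^{m+2}\varphi^{-(m+1)}$, as required.

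For \eqref{itm:square}, the cleanest route is to multiply the identities \eqref{itm:positive} and \eqref{itm:negative} together. The product of the right-hand sides is simply $\varphi^m\cdot(-1)^{m+1}\varphi^{-m} = (-1)^{m+1}$. Expanding the product $(F_m\varphi + F_{m-1})(F_m\varphi - F_{m+1})$ and using $\varphi^2 = \varphi+1$ together with $F_{m+1} - F_{m-1} = F_m$, the $\varphi$-terms cancel and one is left with $F_m^2 - F_{m-1}F_{m+1}$. Thus $F_{m-1}F_{m+1} - F_m^2 = (-1)^m$ for every $m\ge 1$, and specializing $m\mapsto 2m$ gives exactly \eqref{itm:square}. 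No single step is genuinely hard; the main thing to watch is the sign bookkeeping and the cross-term cancellation in the product expansion.
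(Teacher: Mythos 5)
Your proof is correct, but it takes a genuinely different route from the paper's. The paper simply invokes Binet's formula $F_m = \frac{1}{\sqrt5}(\varphi^m - \psi^m)$ with $\psi = -1/\varphi$ and declares all three identities ``straightforward to verify,'' which reduces each to a one-line algebraic check (for instance, identity (3) becomes Catalan/Cassini after expanding and using $\varphi\psi=-1$). You instead work entirely from the defining relation $\varphi^2 = \varphi + 1$ and the Fibonacci recurrence: identities (1) and (2) by induction, and then (3) by the elegant trick of multiplying (1) and (2) to obtain the full Cassini identity $F_{m-1}F_{m+1}-F_m^2=(-1)^m$ and specializing $m\mapsto 2m$. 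Your approach is more elementary in the sense that it does not presuppose Binet's formula (which itself requires a proof), and the derivation of Cassini from the product of (1) and (2) is a nice observation; the paper's route is shorter once Binet's formula is taken as known. One small bookkeeping point in your argument for (1): since your inductive step assumes the claim at both $m-1$ and $m$, you need two base cases, $m=1$ and $m=2$ (the latter being $F_2\varphi+F_1=\varphi+1=\varphi^2$); alternatively a one-step induction $\varphi^{m+1}=\varphi\cdot\varphi^m = \varphi(F_m\varphi+F_{m-1}) = F_m(\varphi+1)+F_{m-1}\varphi = F_{m+1}\varphi+F_m$ avoids this.
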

\begin{proof} 
  This is straightforward to verify, for instance using the well-known formula
$F_m=\frac1{\sqrt5}(\varphi^m-\psi^m)$, where $\psi=-1/\varphi$.
\qedhere
\end{proof}

Next we prove a lemma that establishes a key property of the lattice $\Lambda_\varphi$.

\begin{lemma} \label{lem:area}
  The area of every empty rectangle amidst the points in $\Lambda_\varphi$ is
  at most~$\varphi^4$.  
\end{lemma}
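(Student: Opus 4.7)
The plan is to leverage two automorphisms of $\Lambda_\varphi$ together with Lemma~\ref{lem:identities} to reduce the problem to a manageable case analysis. First I would verify that $\Lambda_\varphi$ is preserved by the $90^\circ$ rotation $\rho(x,y)=(-y,x)$ (which sends $u\mapsto -v$ and $v\mapsto u$) and by the area-preserving hyperbolic automorphism $\sigma(x,y)=(\varphi x,-y/\varphi)$ (which sends $u\mapsto v$ and $v\mapsto u+v$). Both maps are axis-aligned and area-preserving, so they send empty axis-parallel rectangles to empty axis-parallel rectangles of the same area; crucially, $\sigma$ multiplies widths by $\varphi$ and divides heights by $\varphi$.

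Applying powers of $\sigma$ (and $\rho$ if needed) I would reduce to the canonical case where the height $h$ of the empty rectangle $R$ lies in some convenient fundamental range, say $h\in[1,\varphi)$. It then suffices to show that no empty rectangle whose height lies in this range has width exceeding $\varphi^4/h$; equivalently, in every horizontal strip of height $h\in[1,\varphi)$, consecutive lattice points of $\Lambda_\varphi$ have $x$-separation at most $\varphi^4/h$.

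To bound this separation I would enumerate the lattice points in the strip. A point $au+bv=(a+b\varphi,\,a\varphi-b)$ lies in the strip $y\in(y_0,y_0+h)$ iff $b$ lies in an interval of length $h$ depending on $a$. Using the identity $F_m\varphi+F_{m-1}=\varphi^m$ from Lemma~\ref{lem:identities}, the relevant lattice points have $x$-coordinates that are simple combinations of powers of $\varphi$, and a short calculation identifies the maximal gap. Extremally, in the open strip $y\in(0,1)$ the consecutive lattice points $(-\varphi^3,1/\varphi^2)$ and $(\varphi^2,1/\varphi)$ are separated by $\varphi^2+\varphi^3=\varphi^2(1+\varphi)=\varphi^4$, confirming that the bound is tight.

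The main obstacle I anticipate is the case analysis that pins down which $(a,b)$ pairs contribute lattice points to a strip of a given height $h\in[1,\varphi)$, and verifies that no intermediate lattice points break the extremal gap into smaller pieces. The determinant identity $F_{2m+1}F_{2m-1}-F_{2m}^2=1$ from Lemma~\ref{lem:identities} will be essential here to rule out ``stray'' lattice points coming from non-consecutive Fibonacci indices.
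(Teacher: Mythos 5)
Your idea of exploiting the area-preserving hyperbolic automorphism $\sigma(x,y)=(\varphi x,-y/\varphi)$ is a genuinely nice observation that the paper does not make explicit: indeed $\sigma(u)=v$ and $\sigma(v)=u+v$, so $\sigma$ preserves $\Lambda_\varphi$, scales widths by $\varphi$, scales heights by $1/\varphi$, and therefore transports one maximal empty rectangle to another of the same area. This explains \emph{a priori} why a single value $\varphi^4$ should show up for every maximal rectangle, something the paper only observes \emph{a posteriori} after a case computation. The reduction to a fundamental window $h\in[1,\varphi)$ is a different normalization than the paper's: the paper uses only the $90^\circ$ rotation $\rho$ to assume width $\geq$ height, and then handles the full ``scale'' of the rectangle directly through the continued-fraction structure of $\varphi$.

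However, the proposal has a real gap. After the normalization you state that it remains to show that in every horizontal strip of height $h\in[1,\varphi)$ the $x$-coordinates of lattice points have gaps at most $\varphi^4/h$, and then you simply exhibit the extremal configuration (the gap of width $\varphi^4$ between $(-\varphi^3,\varphi^{-2})$ and $(\varphi^2,\varphi^{-1})$ in the strip $0<y<1$). Exhibiting one tight configuration proves the bound is sharp but does not prove the upper bound; the upper bound requires ruling out a larger gap in \emph{any} translated strip of \emph{any} height $h\in[1,\varphi)$. You explicitly defer this (``the main obstacle I anticipate is the case analysis\dots''), but that case analysis \emph{is} the proof --- it is precisely the part the paper carries out. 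The paper does it by identifying the four bounding lattice points $p,q,r,s$ of a maximal rectangle as a fundamental parallelogram, deriving $b/a<\varphi<d/c$ with $ad-bc=1$, invoking the theory of continued fractions to conclude $b/a$ and $d/c$ are consecutive convergents of $\varphi$ (hence ratios of consecutive Fibonacci numbers), and then computing both dimensions in closed form using Lemma~\ref{lem:identities}. You mention the Fibonacci ingredients but do not actually use them to pin down which $(a,b)$ can bound a strip, so the argument is not complete as written. If you want to pursue your route, the missing piece is essentially a three-distance-type analysis of the sequence $a\varphi \bmod 1$ in a window of length $h$, together with a verification that the worst gap occurs exactly in the configuration you wrote down --- at which point you will have rediscovered the continued-fraction bookkeeping the paper uses.
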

\begin{proof}
Let $R$ be a maximal axis-parallel empty rectangle.
Note that since $\varphi$ is irrational, $R$ is bounded.
Denote by $p$, $q$, $r$, $s$ the four lattice points that bound $R$
from the left, below, right and top, respectively.
Refer to~\cref{fig:phi-lattice}\,(b).
Then $pqrs$ is a fundamental parallelogram of the lattice;
as such, its area is $\varphi^2+1$.
Clearly, we may assume $p=(0,0)$.
Further, we may assume $q=cu+dv$, $s=au +bv$,
where $a$, $b$, $c$, $d$ are nonnegative integers:
Indeed, since $\Lambda_\varphi$ is invariant under rotation by $90^\circ$, we can assume that
the width of $R$ is at least as large as its height. Points $q$, $s$ thus lie on the ``funnel''
(depicted in \cref{fig:phi-lattice}\,(a) dotted) within the angle formed by the vectors $u$, $v$. 
The coordinates of points $s$, $q$, $r$ and the area of the parallelogram $pqrs$ are:
\begin{align*}
s &= (a+b\varphi, a\varphi -b), \\
q &= (c+d\varphi, c\varphi -d), \\
r &= ((a+c)+(b+d)\varphi, (a+c)\varphi, -(b+d)), \\
\Area(pqrs) &= |(a+b\varphi)(c\varphi -d) - (a\varphi -b)(c+d\varphi)|=
|(ad-bc)| (\varphi^2+1).
\end{align*}

Since $s$ lies above the horizontal line through $p$ and since $q$ lies below it, we have
$a \varphi -b>0$ and $c \varphi -d <0$.
This implies $a,b,c,d>0$ and 
rewrites as $\frac ba<\varphi <\frac dc$, so in particular $ad>bc$.
Together with the expression for $\Area(pqrs)=\varphi^2+1$ this yields $|ad-bc|=ad-bc=1$.
To summarize, we have
\begin{equation} \label{eq:phi-fractions}
\frac{b}{a} < \varphi < \frac{d}{c} \quad\text{and}\quad ad -bc=1.
\end{equation}

The relation $ad-bc=1$ implies that $\gcd(a,b)=\gcd(c,d)=1$.
By a result from the theory of continued fractions~\cite{Olds63},\cite[Ch.~10]{HW79},
relation~\eqref{eq:phi-fractions} implies that the fractions $b/a$ and $d/c$
are consecutive convergents of $\varphi$. Moreover, it is well known that
the convergents of $\varphi$ are ratios of consecutive Fibonacci numbers:
\begin{align*}
\frac{F_0}{F_{-1}} < \frac{F_2}{F_1} < \frac{F_4}{F_3} < \frac{F_6}{F_5} < 
\frac{F_8}{F_7} < \cdots &<\varphi
< \cdots  < \frac{F_7}{F_6} < \frac{F_5}{F_4} < \frac{F_3}{F_2}
< \frac{F_1}{F_0} =\infty, \\
\frac{0}{1} < \frac{1}{1} < \frac{3}{2} < \frac{8}{5} < 
\frac{21}{13} < \cdots &<\varphi
< \cdots < \frac{13}{8} < \frac{5}{3} < \frac{2}{1} 
< \frac{1}{0} =\infty. 
\end{align*}

Thus we can assume that $\frac{b}{a}= \frac{F_{2k}}{F_{2k-1}}$.
There are two cases:

\emph{Case 1:} $\frac{d}{c}= \frac{F_{2k-1}}{F_{2k-2}}$, and thus
$\frac{F_{2k}}{F_{2k-1}} < \varphi < \frac{F_{2k-1}}{F_{2k-2}}$.

\emph{Case 2:} $\frac{d}{c}= \frac{F_{2k+1}}{F_{2k}}$, and thus
$\frac{F_{2k}}{F_{2k-1}} < \varphi < \frac{F_{2k+1}}{F_{2k}}$. 

\smallskip
(Note that in both cases we indeed have $ad-bc=1$ by~\cref{lem:identities},~\cref{itm:square}.)
We compute $\Area(R)$ using~\cref{itm:positive,itm:negative} of~\cref{lem:identities}.
Let $\Delta{x}$ and $\Delta{y}$ denote the side-lengths of $R$.

\smallskip
\noindent\emph{Case 1:} We have
\begin{align*}
  \Delta{x} &=  (a+c)+(b+d)\varphi = (F_{2k-1}+ F_{2k-2}) + (F_{2k}+F_{2k-1})\varphi\\
  &=
  F_{2k} + F_{2k+1}\varphi = \varphi^{2k+1}, \\
 \Delta{y} &= (a-c)\varphi - (b-d) = (F_{2k-1} - F_{2k-2})\varphi - (F_{2k}-F_{2k-1})\\
 &=
 F_{2k-3}\varphi  - F_{2k-2} = \varphi^{-(2k-3)},
\end{align*}
thus $\Area(R) =  \Delta{x} \cdot  \Delta{y} = \varphi^{2k+1} \varphi^{-(2k-3)} = \varphi^4$,
as required.

\smallskip
\noindent \emph{Case 2:} Similarly, we have
\begin{align*}
  \Delta{x} &=  (a+c)+(b+d)\varphi = (F_{2k-1}+ F_{2k}) + (F_{2k}+F_{2k+1})\varphi\\
  &=
  F_{2k+1} + F_{2k+2}\varphi = \varphi^{2k+2}, \\
 \Delta{y} &= (a-c)\varphi - (b-d) = (F_{2k-1} - F_{2k})\varphi - (F_{2k}-F_{2k+1})\\
 &=
 -F_{2k-2}\varphi  + F_{2k-1} = \varphi^{-(2k-2)},
\end{align*}
thus $\Area(R) =  \Delta{x} \cdot  \Delta{y} = \varphi^{2k+2} \varphi^{-(2k-2)} = \varphi^4$,
as required.
\qedhere
\end{proof}

\smallskip
With~\cref{lem:area} at hand, the proof of~\cref{thm:approx} is straightforward.
\begin{proof}[Proof of~\cref{thm:approx}] Let $R\in\F$ be the smallest-area rectangle among those in $\F$.
  By~\cref{lem:area}, the lattice $\Lambda_\varphi$ pierces all axis-parallel rectangles
  with area at least $\varphi^4$.
Thus the rescaled lattice $ \Lambda_\varphi'=\sqrt{\Area(R)/\varphi^4}\cdot \Lambda_\varphi$ pierces
all axis-parallel rectangles with area at least $\Area(R)$. In particular, it pierces all rectangles
in $\F$. Since the fundamental parallelogram of $\Lambda_\varphi$ has area $\varphi^2+1$,
the fundamental parallelogram of $\Lambda_\varphi'$ has area $(\varphi^2+1)/\varphi^4\cdot \Area(R)$
and gives an approximation factor $\varphi^4/(\varphi^2+1) =1+\frac 25\sqrt 5<1.895$ as claimed.
Note that computing the smallest-area rectangle and the rescaling only take $\O(n)$ time.
\qedhere
\end{proof}

\paragraph{Remarks.} We have learned from the recent article of Kritzinger and Wiart~\cite{KW21} that
a rescaled version of the lattice $\Lambda_\varphi$ was considered several years ago by Thomas Lachmann
(unpublished result) as a candidate for an upper bound on the minimum dispersion $A(n)$ of an
$n$-point set in a unit square. Yet another lattice resembling $\Lambda_\varphi$ was studied
in the same context by Ism\u{a}ilescu~\cite{Is19}. 

It is easy to check that the lattice $\Lambda_\varphi$ yields the upper bound
\[\liminf_{n \to \infty} n A(n) \leq \varphi^4/(\varphi^2+1),
\] 
\ie, matching exactly the dispersion bound obtained by Kritzinger and Wiart using a suitable modification
of the so-called Fibonacci lattice~\cite{FS89}. It is worth noting that:
(i)~the lattice $\Lambda_\varphi$ yields the above dispersion result with a cleaner and shorter proof;
(ii)~the Fibonacci lattice as well as its modification lead to this bound only by a limiting process; 
and perhaps more importantly,
(iii)~the upper bound in~\cref{lem:area} on the maximum rectangle area amidst points in this lattice
holds universally across the entire plane and not only inside a bounding box with $n$ points
(\ie, one does not need to worry about rectangles with a side supported by the bounding box boundary).

\section{Conclusion}  \label{sec:conclusion}

We list several open questions.

\smallskip
\begin{enumerate} \itemsep 3pt

\item (Computational complexity.) Given a family $\F$ of $n$ axis-parallel rectangles,
  what is the computational complexity of determining the optimal density $\den(\F)$
  of a piercing set for all translates of all members in $\F$?
  (For the decision problem: given a threshold $\tau$,
  is there a piercing set whose density is at most $\tau$?)
  Is the problem algorithmically solvable?
  Is there a polynomial-time algorithm?
  And how about the complexity of determining the optimal density $\denL(\F)$ of
  a piercing lattice for $\F$?

\item (Exact answer for two rectangles.) For two rectangles, what is the actual value
  of $\den(w,h)$ (or its reciprocal $\area(w,h)$)
  when $\flw=\flh \ge 2$? Is it the same as $\denL(w,h)$? 

\item (Other shapes.) How about pairs of different shapes such as triangles?

\item (Congruent copies.) How about requiring that the point set pierces all congruent copies
  of a set of shapes, not only translates? See for instance~\cite{BS18}
  where it is shown that when $\F$ consists of a single square (or rectangle),
  a suitable triangular grid gives an upper bound on the density.

\item (Discrete version.) Consider a discrete version where:
(i)~each shape in the family $\F$ consists of cells of an infinite square grid;
(ii)~we consider translates by integer vectors only; and
(iii)~instead of piercing with a point set we pierce with a set of grid cells.
What can be said about this (lowest possible) discrete hitting density $\dendisc(\F)$
or its reciprocal $\areadisc(\F)$?
As one example, we note that Theorems~\ref{thm:two} and~\ref{thm:approx} can be adapted
to the discrete setting in a straightforward way:
When $\F$ consists of two rectangles $R_{a\times b}$, $R_{b\times a}$, where
$b=k\cdot a+r$ (with $k\ge 1$ and $r<a$), the adapted~\cref{thm:two} yields an upper bound
$\areadisc(\{R_{a\times b},R_{b\times a}\})\le k\cdot a^2+r^2$
that solves the puzzle we mentioned in the introduction.
Furthermore, when $\F$ consists of all rectangles that have area at least $K$,
then the adapted~\cref{thm:approx} gives an $\F$-piercing set of cells
with density $(1+\frac25\sqrt{5})/K$.
In the language of Fiat and Shamir~\cite{FS89}, there is a probing strategy that locates
a battleship of $K$ squares in a rectangular sea of $M$ squares (where $M \to \infty$)
in at most $1.895 \, M/K$ probes. This is a substantial improvement over the
$3.065 \, M/K$ bound from~\cite{FS89}.

\smallskip
As another example, when $\F$ consists of two $L$-triominoes
that are centrally symmetric to each other, one can show that $\areadisc(\F)=3$.
In contrast, when $\F$ consists of two (or more) $L$-triominoes,
one of which is obtained from another one by rotation by $90^\circ$,
one can show that $\areadisc(\F)=2$.

\item (Higher dimensions.) Consider the problem in higher dimensions.
When $d=3$ and $|\F|=2$, stretching along the coordinate axis yields
a non-trivial case $\{(a,b,1),(1,1,c)\}$ where $a,b,c \geq 1$.
The trivial upper bound $\area(a,b,c)\le \min\{ab,c\}$
can be matched in two ``easy'' cases:

(i) When $c\ge \ceil{a}\cdot\ceil{b}$ then ``piercing $1\times 1\times c$ is free'':
Briefly, in the plane $z=0$ we use a lattice with basis $[a,0],[0,b]$ and in the planes $z\in\ZZ$
we consider $\ceil{a}\cdot\ceil{b}$ ``integer horizontal offsets'' $[u,v]$, $u=0,\dots,\ceil{a}-1$,
$v=0,\dots,\ceil{b}-1$ and use them periodically (in any order).
    
(ii) When $c\le \floor{a}\cdot\floor{b}$ then ``piercing $a\times b\times 1$ is free'':
Briefly, along the line $x=y=0$ we put points at $k\cdot  \floor{a}\cdot\floor{b}$ for $k\in\ZZ$.
For other vertical lines through integer points we use $ \floor{a}\cdot\floor{b}$
``integer vertical offsets'' such that every $ \floor{a}\cdot\floor{b}$ horizontal
grid-rectangle contains all offsets.
    
Together, the easy cases (i) and (ii) cover the case when $a\in\ZZ$, $b\in\ZZ$, $c\in\RR$
and the case when $ab$ and $c$ ``differ by a lot''. 
Another easy case is $a=1$, when the planar bounds apply (for two rectangles with sizes
$1\times b$ and $c\times 1$).

\smallskip
Finally, from the algorithmic standpoint, given a finite collection of axis-parallel boxes
in $\RR^d$, what approximations for the piercing density can be obtained?

\item (Disconnected shapes.) What can be said about disconnected shapes? The easiest variant
  seems to be when each shape is a set of integer points on the line (or in $\ZZ^d$)
  and we consider translates by integer vectors only.
  Note that the piercing density and the lattice piercing densities may differ in this case;
  for instance, when $S=\{0,2\} \subset \ZZ$, these densities are $1/2$ and $1$, respectively.
  In view of the connection to covering mentioned in~\cref{sec:intro}, it is worth
  mentioning that the problem of tiling the infinite integer grid with finite clusters is
  only partially solved~\cite{Sz98}; however, covering is generally easier than tiling. 
\end{enumerate}

\section*{Acknowledgments} The authors thank Wolfgang Mulzer and Jakub Svoboda
for helpful comments on an earlier version of this work.

\end{document}